\newcommand\GoS{\textsf{GoS}}
\newcommand\Verity{\textsf{Verity}}
\newcommand\Algol{\textsf{Algol}}
\newcommand{\sbr}[1]{\llbracket #1 \rrbracket}
\newcommand{\noincludegraphics}[1]{}
\newtheorem{definition}{Definition}
\newtheorem{theorem}{Theorem}
\newtheorem{lemma}{Lemma}
\begin{document}


\title{Coherent Minimisation:\\ Towards efficient tamper-proof compilation}

\author{Dan R. Ghica
       \institute{University of Birmingham}
       \email{d.r.ghica@cs.bham.ac.uk}
       \and
			  Zaid Al-Zobaidi
       \institute{University of Birmingham}
       \email{z.k.ibrahim@cs.bham.ac.uk}
       }

\def\titlerunning{Coherent Minimisation}
\def\authorrunning{D.R.\ Ghica \& Z.\ Al-Zobaidi}
\maketitle

\begin{abstract}
Automata representing game-semantic models of programs are meant to operate in environments whose input-output behaviour is constrained by the \emph{rules of a game}. This can lead to a notion of equivalence between states which is weaker than the conventional notion of bisimulation, since not all actions are available to the environment. An environment which attempts to break the rules of the game is, effectively, mounting a \emph{low-level attack} against a system. In this paper we show how (and why) to enforce game rules in games-based hardware synthesis and how to use this weaker notion of equivalence, called \emph{coherent equivalence}, to aggressively minimise automata.
\end{abstract}

\section{Introduction}  

Computer security `exploits' take advantages of mistakes in programs, called `vulnerabilities', to cause unintended behaviour to occur on a computing device. Exploits are most commonly low-level attacks that violate the abstractions of the programming language to create behaviour inexpressible in the language itself. Such attacks are possible because lower-level languages (`machine code') are less constrained behaviourally than higher-level languages, so a run-time system, when confronted with executable code, cannot tell whether that code is the result of a legitimately compiled program or whether it contains behaviours deemed `illegal'. Restricting the behaviour of machine code is the essence of `tamper-resistant' compilation, and it can be achieved in various way: sand-boxing the code to prevent unauthorised access to memory, randomising the memory layout so that code cannot `guess' where certain data is stored even if it has physical access to it~\citep{ShachamPPGMB04} or monitoring the control flow in a program to ensure that no arbitrary jumping occurs~\citep{AbadiBEL09}. 

A compiler and runtime system that can detect and enforce machine code behaviour so that it satisfies all the abstraction of the higher-level programming language would be, effectively, a `fully abstract' compilation and execution environment offering the maximum level of tamper resistance: `tamper-proof' compilation~\citep{Abadi98}. In a general-purpose system this is perhaps impossible to achieve in a practical way. However, we will show how it is achievable in `higher-level synthesis' (also known as `hardware compilation'), the automatic synthesis of special-purpose digital circuits from programs written in conventional programming languages, using game-semantic models. We will then further show how, once guarantees on the behaviour of the environment can be effectively enforced, the automata representation of the game semantic models can be aggressively optimised. Essentially, an environment which cannot perform arbitrary actions cannot distinguish between as many states in a transition system as the unrestricted system. This is a new notion of equivalence, which we call `coherent equivalence'.

\section{The GoS hardware compiler} 

The \emph{Geometry of Synthesis} (\GoS)\footnote{\url{http://veritygos.org}} compiler~\citep{Ghica07,GhicaS10,GhicaS11,GhicaSS11} produces (VHDL) descriptions of digital circuits from a conventional functional-imperative programming language. The circuits produced by the compiler are a concrete representation of the game-semantic model of the language~\citep{GhicaMO06}.

\subsection{The language Verity}

The source language of \GoS\ is called \Verity, and it is an \Algol-like language in \cite{Reynolds81} sense. It represents a combination of the simply-typed (call-by-name) lambda calculus with the simple imperative language of while loops. Additionally, \Verity\ has primitives for parallel execution of commands. 

The combination of call-by-name and local store, although made popular in \Algol\ 60, fell out of favour as languages with global store (and more generally, global effects) and call-by-reference (\textsf{C}), call-by-value (\textsf{ML}) and call-by-need (\textsf{Haskell}) became prevalent for reasons of convenience and efficiency. 

However, in the case of hardware compilation the perceived disadvantages of \Algol\ yield unexpected benefits:
\begin{description}
\item[Local store.] The notion of global store does not fit the way memory is used in a circuit. In a circuit, stateful elements are scattered throughout the design, wherever needed. There is no need to bring them all together in a single global memory because this would be inefficient in multiple ways. Managing access to this global memory would require complex control elements which would be costly in energy, footprint and latency. It would also constitute a bottleneck for concurrency. Note that the lack of language support for global store does not mean that \Verity\ cannot deal with programs which access off-chip RAM. It only means that such access needs to be programmed explicitly and used via library calls. This is an advantage because RAM controllers can exploit the precise memory hierarchy of the device in a way that generic language support cannot. 
\item[Call by name.] \Verity\ is a functional programming language, and it is well known that managing closures is one of the great potential sources of inefficiency in compilers. Dealing with memory management for closures in functional hardware synthesis raises additional difficulties because all usage of memory in a circuit must be bounded at synthesis time. This makes it impossible to support higher order functions~\citep{MycroftS00}. However, call-by-name closures require less storage, because of constant re-evaluation of the thunks. This provides an elegant, albeit somewhat fortuitous, solution to the problem of memory management for closures. 
\end{description}
The syntax of the language is standard for an \Algol-like language. Here we only provide two examples, to give a flavour of the language. First, a naive and highly inefficient implementation of a Fibonacci number calculator:
\begin{verbatim}
let fbn = (fix \f.\x. if x<1 then 0 else if x<2 then 1 else f(x-1)+f(x-2)) in fbn(5)
\end{verbatim}
Second, an efficient implementation using memoisation:
\begin{verbatim}
  new mem(128) in
  new i := 0 in
  while !i < 128 do {mem(!i) := 0; i := !i + 1};
  let fbv = \l.(fix \fib.\a.\n.
    new n1 in new n2 in new n3 in new n4 in
    n1 := n;
    if !n1 < 2 then 1
    else if !a(!n1) > 0 then !a(!n1)
    else (n2 := fib(a)(!n1 - 2);
          n3 := fib(a)(!n1 - 1);
          n4 := !n2 + !n3;
          a(!n1) := !n4;
          !n4))(mem)(l) in fbv(5)
\end{verbatim}

The examples above should serve to convince that \Verity\ is a conventional programming language with no hardware-specific primitives or constructs, although the type system has several subtle restrictions to ensure that the game-semantic models are finite-state, as discussed below.

\subsection{Theoretical and methodological background}

Compared to other higher-level academic or industrial synthesis tools the emphasis of \GoS\ is on correct and efficient support for the functional infrastructure of the language. Some restrictions are unavoidable because of the finite-state nature of the digital circuits, and the aim of \GoS\ is to impose no additional restrictions. It is a key methodological principle of the \GoS\ project that mature support for functions is essential in the pursuit of a useful and useable compiler. The theory behind \Verity-\GoS\ and the methodological considerations are explained at some length by~\cite{DBLP:conf/memocode/Ghica11}. 

At the most abstract level, digital circuits can be seen as topological diagrams of boxes and wires. The diagrams are topological (rather than geometrical) because in design we often wish to abstract from the size and length of the connectors, and from the precise placement of the components; such low-level matters are usually sorted out algorithmically by electronic design tools. One economical, elegant and mathematically canonical representation of diagrams is using combinators, which form a mathematical structure called a \emph{compact closed category}~\citep{KellyL80}. What is particularly useful about such a category in our context is that it can also describe a canonical model for a higher-order programming language with affine typing. This means that that the higher-order structure of the language is reflected directly in the diagrammatic structure of the circuit, which further means that abstraction and application can be represented with zero overhead.

From a practical point of view, the key consequence of the \GoS\ approach is that compiling a \Verity\ program produces a circuit with an interface determined by the type signature of the program. It is conventional to write the type of a program as a \emph{judgement} 
$x_1:T_1,\ldots,x_n:T_n\vdash P:T,
$
which says that program $P$ is well-typed of type $T$ and has free identifiers $x_i$ of type $T_i$. 

Each type corresponds to a circuit interface, defined as a
list of ports, each defined by data bit-width and a polarity. Every
port has a default one-bit control component. For example we write an
interface with $n$-bit input and $m$-bit output as $I=(+n, -m)$. More
complex interfaces can be defined from simpler ones using
concatenation $I_1\otimes I_2$ and polarity reversal
$I^\bullet=\mathsf{map}\,(\lambda x.{-}x)I$. If a port has only control and
no data we write it as $+0$ or $-0$, depending of polarity. Note that
obviously ${+}0\neq{-}0$ in this notation!

An interface for type $T$ is written as $\sbr T$, defined as
follows:
\begin{align*}
&  \sbr{\mathsf{com}} = (+0, -0) \qquad
   \sbr{\mathsf{exp}} = (+0, -n) \qquad
   \sbr{\mathsf{var}} = (+n, -0, +0, -n)\\
&  \sbr{T\times T} = \sbr{T}\otimes\sbr{T'} \qquad
  \sbr{T\rightarrow T'} = \sbr{T}^\bullet\otimes\sbr{T'}.
\end{align*}
The interface for commands \textsf{com} has two control ports, an input for starting execution and an output for reporting termination. The interface for integer expressions \textsf{exp} has an input control for starting evaluation and data output for reporting the value. Assignables \textsf{var} have data input for a write request and control output for acknowledgment, and control input for a read request along with data output for the value.  The tensor is a disjoint sum of the ports on the two interfaces while the arrow is like the tensor, but with a polarity-reversal of the ports occurring in the contra-variant position, as illustrated in the example below.

Diagrammatically, a list will correspond to ports from
left-to-right and from top-to-bottom. We indicate ports of zero
width (only the control bit) by a thin line and ports of width $n$ by
an additional thicker line (the data part). For example a circuit of
interface $\sbr{\mathsf{com}\rightarrow \mathsf{com}}=(-0,+0,+0,-0)$
can be written in any of these two ways in Fig.~\ref{fig:erp}.

\begin{wrapfigure}{l}{65mm}
    \includegraphics{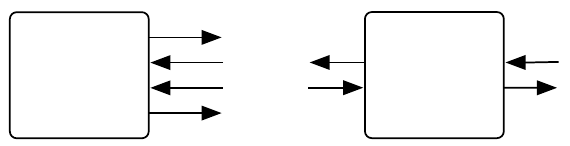}
    \caption{Equivalent representation of ports}
    \label{fig:erp}
\end{wrapfigure}

The unit-width ports are used to transmit \emph{events}, represented
as the value of the port being held high for one clock cycle. The
$n$-width ports correspond to data lines. We will work
under the assumption that the event on the unit port is a control
signal indicating the data on the data line is valid. 

The significant restriction that makes support for functions so simple is that the type system is \emph{affine}, which means that in function application the function and the argument cannot share free identifiers. This is an important restriction which has a major impact on the expressiveness of the language. For once, it is incompatible with imperative programming, in which variables naming memory locations must be reused in order to be read and written. 

In order to overcome this restriction we carefully add variable sharing to the programming language, using a type system called \emph{Syntactic control of concurrency} (SCC)~\citep{GhicaMO06}, which is based on Reynolds's \emph{Syntactic control of interference}~\citep{Reynolds78,OHearnPTT99}. The idea is to allow sharing of variables in product formation, but not in function application. Imperative sequential operations are then given uncurried type, so they can reuse variables. For example, the term \texttt{x:=!x+1} can be written, using a functionalised pre-fix notation as \texttt{assign (x, add(deref(x), 1))}. Assignment has type \texttt{assign : var * int -> com} and thus can share variables between its two arguments. An extra benefit of this type system is that by giving parallel command composition a curried type \texttt{par : com -> com -> com} it makes it impossible to have race conditions in the programming language, since the two arguments can never share identifiers. The program \texttt{c || c} (also written as \texttt{par c c}) does not type-check.

Unlike functions, variable sharing does not arise automatically out of the algebraic structure of the diagrammatic model. It needs to be implemented. Categorical considerations are nevertheless helpful in providing a family of equational specifications, corresponding to the notion of \emph{Cartesian product}, which establish that variable sharing is correctly implemented (because \emph{contraction} in the syntax corresponds to Cartesian product in the semantics). 

The conditions required for the correct implementation of product in \GoS\ amount to an input-output \emph{protocol} which all synthesised circuits must satisfy in order to compose properly. In the implementation, this protocol amounts to a simple \emph{bus protocol} needed for the correct time-multiplexed sharing of sequentially used circuits. They are formally described by~\cite{Ghica07}.

\begin{wrapfigure}[7]{r}[0pt]{70mm}
    \includegraphics{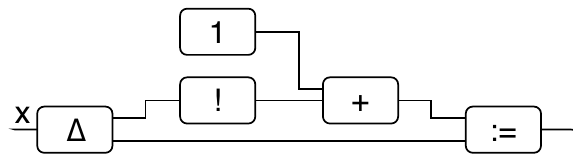}
    \caption{The diagonal circuit $\Delta$}
    \label{fig:diag}
\end{wrapfigure}
 
Diagrammatically, sharing is implemented by specialised circuits which correspond to the \emph{diagonal} in the Cartesian category of circuits. For example, the term $\mathtt{x:var\vdash x := {!}x+1:com}$ corresponds to the diagram sketched in Fig.~\ref{fig:diag}, with the diagonal labelled $\Delta$ used to share access to variable \texttt{x}.
 
To give an interactive, event based semantics to the imperative constants of \Verity\ we use the \emph{game semantics} of the language, which is formulated in this style. The interpretation of constants is standard in game semantics and will be not detailed here. To give a flavour of the implementation we show the iterator only (for convenience we have marked the top level ports, the ports of the loop guard and the ports of the body of the loop) in Fig.~\ref{fig:iter}, where OR joins two signals, T is a multiplexer and D is a unitary delay. This circuit can be realised either asynchronously~\citep{GhicaS10} or synchronously~\citep{GhicaM11}. Its input-output behaviour is:
\begin{itemize*}
\item receive an input signal from the top level;
\item propagate the input signal to the guard;
\item receive an input signal from the guard when it is ready;
\item use the data line from the guard in multiplexer $T$ to 
  \begin{itemize}
  \item propagate the signal back out to the top level if the guard is false;
  \item propagate the signal out to the loop body if the guard is true;
  \end{itemize}
\item use the termination acknowledgement from the body to trigger an evaluation of the guard again, which will cause the process to iterate. 
\end{itemize*}

\begin{wrapfigure}{r}{50mm}
    \includegraphics{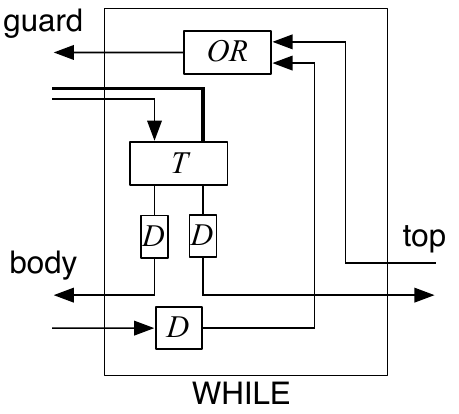}
    \caption{Iterator}
    \label{fig:iter}
\end{wrapfigure}
 
To have an expressive and convenient programming language the restrictions of the SCI system are still undesirable. They can be however avoided, to a great extent, using program analysis and transformation as described by~\cite{GhicaS11}. Finally, recursion can be implemented in the same framework, subject to several minor restrictions~\citep{GhicaSS11}. We do not describe these features in detail here as the complications they introduce are not directly relevant to tamper-proof compilation or coherent minimisation. The techniques described apply to these features as well. 

The compilation process is compositional and it allows the synthesis of circuits corresponding to \emph{open terms}. Compositionality in the compiler means that we have immediate support for \emph{separate compilation}. This is essential for having compiler support for (pre-compiled) libraries but, most importantly, for supporting \emph{foreign function interfaces} (FFI). Through the FFI we can interact with system-specific functionality which can be implemented outside of the programming language, using a conventional HDL. This is important as useful as low-level drivers for peripherals are written in HDL, but from the language we prefer to interact with them via function calls. 

Separate compilation and foreign function interface play a great role in making a compiler useful. However, interfacing with circuits produced outside the compiler exposes the synthesised code to low-level attacks, because such circuits cannot be assumed to satisfy the input-output protocol which synthesised circuits both satisfy and assume in order to operate properly.

\section{Protocols and low-level attacks} \label{sec:palla}

Tamper-proof compilation is relative to whatever notion of tampering we consider possible on pragmatic considerations, so a circuit is tamper resistant to the same extent as its physical substrate is. In other words, the high-level constraints needed for the proper operation of synthesised circuits cannot be violated without violating the underlying \emph{physical} constraints of the circuit. Note that some FPGA devices, such as Altera's \textsf{Cyclone III LS}, have physical anti-tamper layers which include special protection for the programming ports and redundancy checks.  \footnote{\url{http://www.altera.com/corporate/news_room/releases/2009/products/nr-ciii_ls.html}} 

\begin{wrapfigure}{l}{70mm}
    \includegraphics{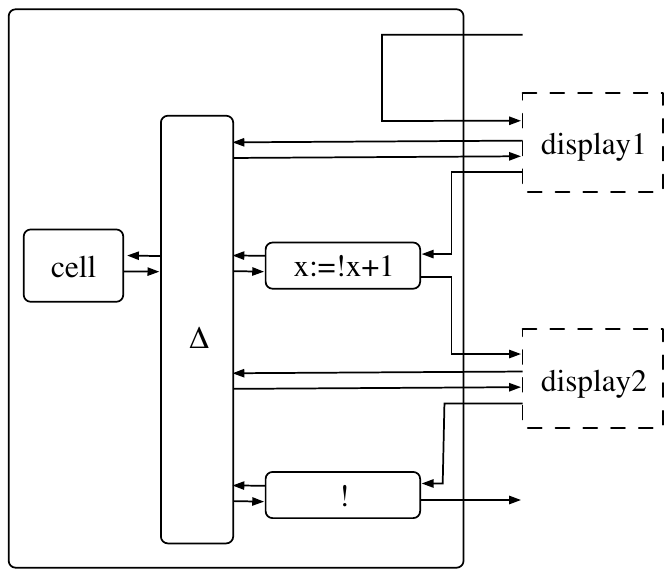}
    \caption{Example of synthesised program using the FFI}
    \label{fig:lla}
\end{wrapfigure}

Example of physical attacks on circuits involve over-heating or over-clocking the circuit so that it behaves erroneously on an electronic level. Also of a physical nature are observations against the temperature or energy consumption of the circuit as well as timing its responses. We provide no means of resistance against such attacks, but only against attackers which provide inputs and observe outputs at the ports only, within the normally accepted parameters of operation of the device. 

Let us illustrate the problem of low-level attacks with a very simple
example. Consider a program which interacts with the external
environment using functions \texttt{display1, display2:exp->com} to
drive, for example two segmented LED displays: 

\texttt{new x := 0 in display1(!x);}

\texttt{x:=!x+1; display2(!x); !x}

According to the semantics of \Verity\ this program should first
display the value 0 on device 1, then display value 1 on device 2,
then return the value 1 to the top level. The high-level diagram of
the concrete synthesised circuit in Fig.~\ref{fig:lla}.

The circuit labelled \texttt{x:=!x+1} is the incrementer in Fig.~\ref{fig:diag}. The circuit \texttt{cell} is a register storing the value of the variable and \texttt{!} is a de-referencer. The diagonal $\Delta$ shares access to \texttt{x} for the display functions, the incrementer and the final dereferencing. The dotted boxes are the implementations for the display functions, realised in HDL and visible from the programming language through the FFI. In order to simplify the drawing of the circuit the data lines are implicit where required; we only show the control lines.

Let us label the ports of the synthesised circuit top-to-bottom starting with 0 for the top-level port requesting execution and ending with 9 for the top-level port reporting the result. The correct interaction in which such a circuit is involved proceeds as follows:
\begin{enumerate*}
\item receive a top-level input request on port 0 to start execution;
\item request external function \texttt{display1} to execute using port 1;
\item using port 2 function \texttt{display1} may inquire what the value of its argument is, zero or more times;
\item using port 3 the circuit will always provide value 0 as response, the state of \texttt{cell};
\item eventually \texttt{display1} will terminate, reporting termination on port 4;
\item upon incrementing the register the circuit will use port 5 to request \texttt{display2} to execute;
\item  using port 6 function \texttt{display2} may inquire what the value of its argument is, zero or more times;
\item using port 7 the circuit will always provide value 1 as response, the new state of \texttt{cell};
\item eventually \texttt{display2} will terminate, reporting termination on port 8;
\item the circuit will report final value 1 on port 9. 
\end{enumerate*}
However, the environment, consisting of the top level and the two display functions can violate the input-output behavioural assumptions of synthesised code. Consider the environment in Fig.~\ref{fig:lla2}.
\begin{wrapfigure}[15]{r}{65mm}
  \includegraphics{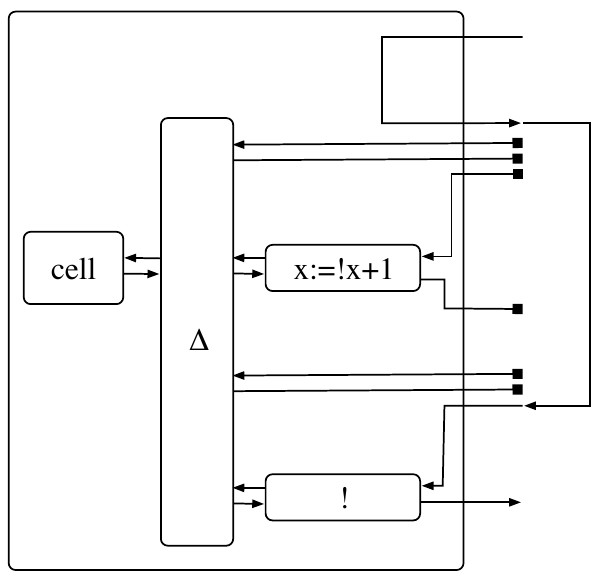}
  \caption{Environment which breaks language abstraction}
  \label{fig:lla2}
\end{wrapfigure}
The transaction in which the circuit is now involved is:
\begin{enumerate*}
\item receive an top-level input request on port 0 to start execution;
\item request external function \texttt{display1} to execute using port 1;
\item \texttt{display2} (illegally) reports termination on port 8;
\item the circuit will report final value 0, the initial state of the register, on port 9. 
\end{enumerate*}
The unused ports are marked as black squares for emphasis. 

The low-level attack violates the input-output behaviour of synthesised circuits, the essential features characterising correctly compiled \Verity\ programs, and it causes the program to produce the wrong value 0 instead of the expected value 1.  It is easy to see that the environment can manipulate the inputs and output to the two display functions so that the register \texttt{cell} and the final result can have  whatever value is desired by the attacker. Obviously, from a security point of view such tampering unacceptable as it can lead to a wide range of attacks against data integrity. 

\section{Enforcing programming language abstractions} 

Low level attacks are possible when the system can perform actions that break the programming language abstractions. But can we prevent the system from performing such actions? In this particular case the answer is positive. Programming language abstractions are reflected into the structure of the synthesised circuits in two ways: statically, as the input and output ports of the circuit, or dynamically, as the input-output behaviour of the environment in which the circuit operates.

\begin{wrapfigure}{l}{75mm}
  \includegraphics{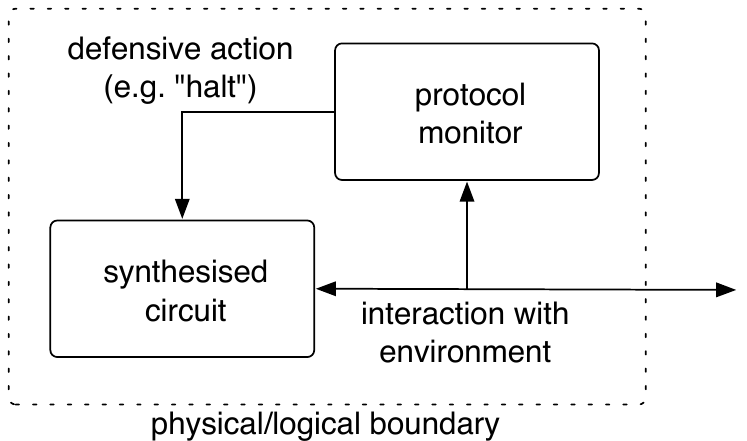}
  \caption{Architecture of a tamper-proof circuit}
  \label{fig:tpc}
\end{wrapfigure}

The static port structure cannot be violated, but the dynamic behaviour of the environment can violate the protocol-like semantics of the language. We can restrict the behaviour of the environment to legal traces by taking advantage of several facts~\citep{GhicaMO06}. First, we know what the \emph{fully abstract} model of \Verity\ is. A fully abstract model is a correct and complete characterisation of all the traces that can be generated by a synthesised \Verity\ program. Second, the fully abstract model of \Verity\ has a finite-state automaton representation for any type signature. Finally, the low-latency representation of the model, which is used for hardware synthesis, also has a finite-state representation~\citep{GhicaM11}.

The three observations above mean that all the legal interactions between a circuit and its environment can be described by a finite state machine, therefore by a digital circuit. In order to achieve tamper-proof-ness a synthesised circuit must not interact with its environment directly, but the interaction must be mediated by a monitor which will detect any illegal interactions and take appropriate actions if such illegal interactions occur. As illegal interactions indicate tampering attempts, the appropriate actions may be reset, halt, intentionally erratic behaviour or even destroying the circuit, depending on the level of protection and sensitivity desired. Schematically, the tamper proof circuit will look like in Fig.~\ref{fig:tpc}.

The precise specification of the legal interaction protocol and its low-latency asynchronous specification used for hardware synthesis are given by~\cite{GhicaM11} and will not be repeated here. For illustration we will detail the example of the previous section. The program has \emph{signature} \texttt{display1:exp$\rightarrow$com, display2:exp$\rightarrow$com $\vdash$ M:exp}, where $M$ is the program. To wit, the program uses two non-locally defined functions, \texttt{display1, display2} which are procedures taking integer expressions as arguments, and it has type expression. We write this signature as 
\[
\mathtt{(exp_1\rightarrow com_2)\rightarrow(exp_3\rightarrow com_4)\rightarrow exp_5}.
\]
The game-semantic model for \Verity\ stipulates that all legal traces in which the program can be involved have to have a form described by the Mealy machine in Fig.~\ref{fig:async}, which is dependent on the signature only:

\begin{wrapfigure}[13]{r}{70mm}
  \includegraphics[scale=0.75]{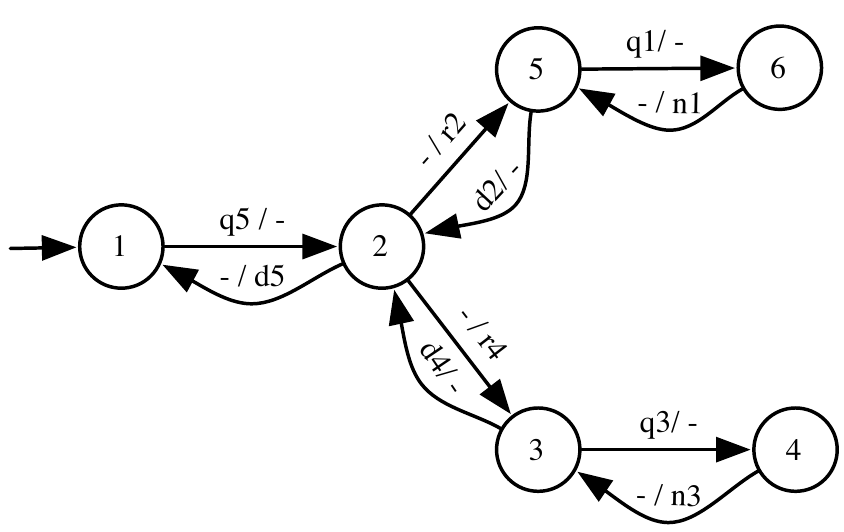}
  \caption{A game-semantic protocol, automaton representation}
  \label{fig:async}
\end{wrapfigure}

Intuitively, the reading of the protocol is this:
\begin{enumerate*}
\item The environment may start executing the program~(q5)
\item The program may terminate immediately (d5) or may ask for either of display functions to be evaluate (r2 or r4)
\item If a function was called by the program, it is allowed to either return immediately (d2 or d4) or it can evaluate its argument (q1 or q3) any number of times. 
\item The program must respond to a request to provide the argument of the function (n1 or n3).
\end{enumerate*}

The protocol above is \emph{asynchronous}, and for the purpose of hardware synthesis we use a low-latency representation called \emph{round-abstraction}~\citep{GhicaM11}, allowing multiple inputs and outputs on the same transition while avoiding deadlocks and race conditions, as in Fig.~\ref{fig:sync}.

\begin{wrapfigure}[15]{l}[0pt]{90mm}
  \includegraphics[scale=0.75]{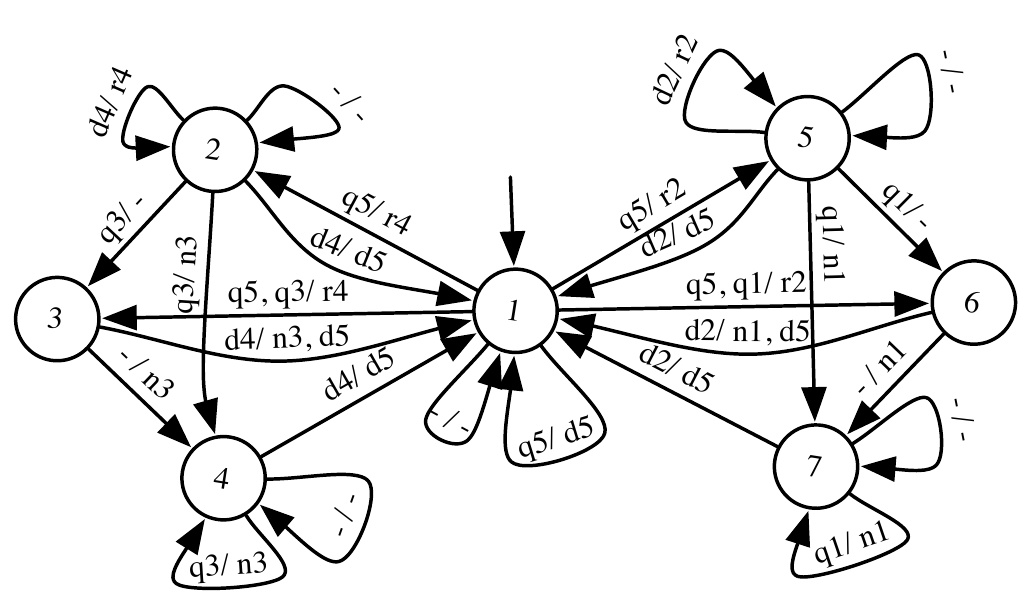}
  \caption{Synchronous representation of a protocol}
  \label{fig:sync}
\end{wrapfigure}

Producing a circuit representation of these finite-state machines is standard. Let us call this circuit \textsf{M} (monitor). The tamper-proof version of the circuit from the previous section is given in Fig.~\ref{fig:tpcex}.
The input lines to M, drawn in a lighter colour, are the interactions with the environment, and the output lines from M trigger the reset lines of the component sub-circuits. Note that the monitor M can also be placed in series (rather than in parallel) with the monitored circuit so that tampering signals never reach it. This is only marginally safer but comes at a cost of increased latency.  

With the tamper-proof version of the circuit the attacks of the previous section trigger resets (or whatever other defensive anti-tampering behaviour is desired in the concrete implementation) and are rendered ineffective. 

\section{Coherent equivalence and minimisation}  

By constraining the interaction between the circuit and the environment, the tamper-proof compiler makes possible more aggressive optimisations. In conventional automata optimisation two states are considered equivalent if they are not distinguishable by any environment; this concept is formalised by \emph{bisimulation}. Bisimilar states can be identified, leading to optimised automata with fewer number of states. 

In the case of the tamper-proof compiler, the interaction between the circuit (an automaton) and the environment is monitored so that only certain interactions are permitted. This makes the environment less discriminating, leading to more states being equivalent. In the limit case of an empty protocol (no interactions are allowed) for example, all states of an automaton are equivalent and can be identified. Conversely, in the other limiting example of the protocol that permits all interactions, this notion of equivalence reduces to conventional bisimulation. 

\begin{wrapfigure}{r}{60mm}
  \includegraphics[scale=0.75]{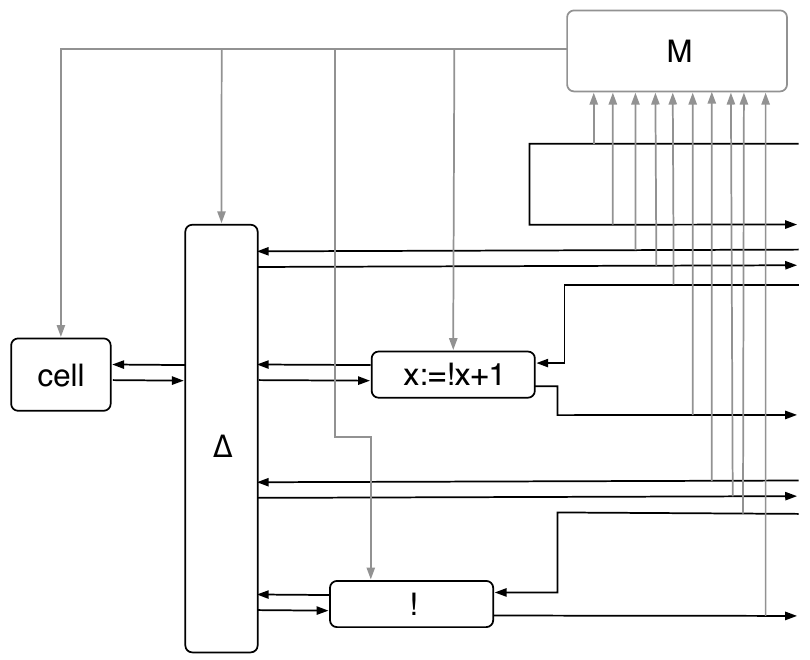}
  \caption{Tamper-proof compiled circuit with monitor}
  \label{fig:tpcex}
\end{wrapfigure}

In this section we formalise the concept of \emph{coherent equivalence} for finite-state transducers and we formulate (the obvious) minimisation algorithm based on this notion of equivalence, proving its correctness. In the following section we discuss the way it is incorporated in the \Verity/\GoS\ compiler. A similar notion of coherent bisimulation could be formulated but it is more complex; since we are only interested in finite traces we will not pursue this theoretical development here. 

Let a \emph{transducer signature} $A$ be a pair of finite sets of labels $(I_A, O_A)$, the input and the output ports of a transducer, respectively. By $A^\bullet$ we mean signature $A$ with input-output port polarities reversed. We call a (possibly empty) subset of $A$ a \emph{round}, and an occurrence of a label in a round an \emph{event}. Let a \emph{synchronous trace} over signature $A$ be a sequence of rounds. Let $\epsilon$ be the \emph{empty trace (empty sequence)}.  
\begin{definition}[Transducer]
  A \emph{transducer} with signature $A$, written $T:A$ is a triple $\langle S,s,\delta\rangle$ where $S$ is a set of states, $s^0\in S$ is the initial state and $\delta\subseteq S\times\mathcal P(A)\times S$ is a transition relation. 
\end{definition}
We often write $S_T$ to mean ``the set of states $S$ of transducer~$T$'', etc. We extend the transition relation $\delta$ to traces $\hat\delta\subseteq S\times\mathcal P(A)^*\times S$ in the usual way:
\begin{itemize*}
\item for any state $q\in S$, $(q,\epsilon,q)\subseteq\hat\delta $
\item For any round $V\subseteq A$, states $s,s'\in S$ and trace $t\in\mathcal P(A)^*$, $(s, t\cdot V, s')\in\hat\delta$ if and only if there is a state $s''\in S$ such that $(s,t,s'')\in\hat\delta$ and $(s'',V,s')\in\delta$. 
\end{itemize*}
Let $\sbr{T:A}$ be the set of traces of a transducer, 
$
\sbr{T:A}\stackrel{\text{def}}=\{
t\in\mathcal P(A)^* \mid \exists s\in S.(s^0, t, s)\in\hat\delta
\}
$.
Given a state $s\in S$ for a transducer, it is useful to know the set of witness traces which can reach this state in the transducer,
$
\omega_T(s)\stackrel{\text{def}}=\{
t\in\sbr T\mid 
(s^0,t,s)\in\hat\delta
\}.
$

We say that two transducers with the same signature $T,T':A$ are \emph{equivalent} if and only if they have the same set of traces, $T\equiv_A T'$ iff $\sbr T=\sbr{T'}$. This is the conventional notion of equivalence, and it is preserved by all common operations on transducers, such as intersection, product, etc. 
\begin{definition}[Intersection]\label{defn::t-intersection}
  Given transducers $T,T':A$, we define 
  $
  T\cap T'\stackrel{\text{def}}=\langle
  S_T\times S_{T'},
  (s_T^0,s_{T'}^0),
  \delta
  \rangle
  $, with
  $
  ((s_1,s_1'),V,(s_2,s_2'))\in\delta \text{ iff } 
  (s_1,V,s_2)\in\delta_T \text{ and }
  (s_1',V,s_2')\in\delta_{T'}.
  $
\end{definition}
This definition is sound in the sense that for any transducers $T,T':A$ we have that $ \sbr{T \cap T'}=\sbr{T} \cap \sbr{T'}$. The proof is immediate. 
We define a laxer notion of equivalence motivated by a restricted set of interactions between the transducer and its environment. Let us define this restricted set of interactions by $P:A$, also a transducer, which we call a \emph{protocol}. 
\begin{definition}[Coherent equivalence]\label{defn::t-coherent-equivalence}
  We say that transducers $T,T':A$ are \emph{coherently equivalent} under protocol $P:A$, written $T\equiv_A^P T'$ if and only if $T\cap P\equiv_A T'\cap P$. 
\end{definition}
The main definition in this section identifies when two states are coherent, i.e. equivalent under a restricted set of observations.

\begin{definition}[Coherent state simulation]\label{defn::coherent-refine}
Given a transducer $T:A$, a protocol $P:A$ and a relation $R \subseteq S_T \times S_T$, we say that $R$ is a \emph{coherent simulation}, iff, for any $(s',s'') \in R$ the following two conditions hold:
\begin{enumerate}
\item For any transition $(s'', V, r'') \in \delta_T$, there exists a transition $(s', V, r') \in \delta_T$ such that $(r', r'') \in R$;
\item For any round $V \subseteq A$, such that there is no transition from $s''$ labelled with $V$,
\begin{enumerate}
\item either there is no transition from $s'$ labelled with $V$,
\item or $\omega_T (s'')\cdot V \cap \sbr P = \emptyset.$
\end{enumerate}
\end{enumerate}
\end{definition}
If $(s',s'')\in R$ for some protocol $P$ we write $s'\smallfrown^P s''$.


\begin{definition}[Coherent state equivalence]\label{defn::symmetric-property}
  Given transducer $T:A$, protocol $P:A$ and states $s',s''\in S_T$ we say they are \emph{coherently equivalent}, written $s'\asymp_A^P s''$ if and only if $s'\smallfrown^P_A s''$ and $s''\smallfrown^P_A s'$.
\end{definition}
Given a function $f:A\rightarrow B$ we denote by $(f\mid x\mapsto y):A\cup\{x\}\rightarrow B\cup\{y\}$ the function which maps $x$ to $y$ and otherwise behaves like $f$. We denote by $id_A:A\rightarrow A$ the identity function on $A$, omitting the subscript if clear from context.

We call the transducer obtained by identifying two states a \emph{quotiented} transducer, defined as follows.
\begin{definition}[Quotienting]\label{defn::t-quotienting]}
  Given transducer $T=\langle S\uplus\{s_1,s_2\}, s^0,\delta\rangle$ we define its \emph{quotient} $T/(s_1,s_2)$ as follows:
  \begin{itemize*}
  \item $S_{T/(s_1,s_2)}=S\uplus\{s\}$
  \item $s^0_{T/(s_1,s_2)}= (id_S\mid s_1\mapsto s\mid s_2\mapsto s)(s^0)$
  \item $(r_1, V, r_2)\in \delta_{T/(s_1,s_2)}$ iff there are $r_i'\in(id_S\mid s_1\mapsto s\mid s_2\mapsto s)^{-1}(r_i)$, $i=1,2$ such that $(r_1', V, r_2')\in \delta$.
  \end{itemize*}	
\end{definition}

\begin{lemma}\label{lemma::T subset from the quetiented}
For any transducers $T,T/(s',s''):A$ we have $\sbr{T} \subseteq \sbr{T/(s',s'')}$.
\end{lemma}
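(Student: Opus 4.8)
The plan is to exhibit the quotienting operation as a functional simulation (a label-preserving homomorphism) from $T$ onto $T/(s',s'')$, and then to observe that such a map can only enlarge the trace set. Concretely, write $h=(id_S\mid s'\mapsto s\mid s''\mapsto s)$ for the collapse map that sends both merged states to the fresh state $s$ and fixes every other state of $S$. By the definition of the quotient transducer, $h(s^0)=s^0_{T/(s',s'')}$, so $h$ respects initial states.

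The first key step is to check that $h$ carries transitions to transitions: if $(r_1,V,r_2)\in\delta_T$, then $(h(r_1),V,h(r_2))\in\delta_{T/(s',s'')}$. This is immediate from the definition of the quotient transition relation, since $r_i\in h^{-1}(h(r_i))$ supplies precisely the pair of preimages witnessing the required quotient transition. Note that only the easy ``if there is a preimage transition then there is a quotient transition'' direction of that biconditional is used here.

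The second step lifts this to the extended relation $\hat\delta$ by induction on the length of the trace, following the inductive definition of $\hat\delta$. In the base case $t=\epsilon$, for every state $q$ we have $(q,\epsilon,q)\in\hat\delta_T$ and likewise $(h(q),\epsilon,h(q))\in\hat\delta_{T/(s',s'')}$. For the inductive step on $t\cdot V$, decompose $(q_0,t\cdot V,q_2)\in\hat\delta_T$ as $(q_0,t,q_1)\in\hat\delta_T$ together with $(q_1,V,q_2)\in\delta_T$; the induction hypothesis gives $(h(q_0),t,h(q_1))\in\hat\delta_{T/(s',s'')}$, the transition-preservation step gives $(h(q_1),V,h(q_2))\in\delta_{T/(s',s'')}$, and recomposing via the inductive clause of $\hat\delta$ yields $(h(q_0),t\cdot V,h(q_2))\in\hat\delta_{T/(s',s'')}$. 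Finally, given any $t\in\sbr{T}$ there is a state $q$ with $(s^0,t,q)\in\hat\delta_T$; applying the lifted homomorphism produces $(s^0_{T/(s',s'')},t,h(q))\in\hat\delta_{T/(s',s'')}$, so $t\in\sbr{T/(s',s'')}$, which is the claimed inclusion.

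I expect no serious obstacle: this is the ``easy half'' of the correctness of minimisation, expressing the fact that collapsing states can never delete an existing run. The only points that require care are applying the quotient transition definition in the correct direction and keeping the induction on $\hat\delta$ honest about its two clauses. It is worth flagging that the reverse inclusion $\sbr{T/(s',s'')}\subseteq\sbr{T}$ is \emph{false} in general, since merging arbitrary states can create spurious traces; recovering a controlled converse under the protocol is exactly the role of the coherence conditions of Definition~\ref{defn::coherent-refine}, but none of that machinery is needed for the present lemma.
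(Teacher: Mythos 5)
Your proof is correct and takes essentially the same approach as the paper: the paper disposes of this lemma in one line, observing that quotienting preserves all old transitions while only possibly adding new ones, and your collapse-map homomorphism $h$ together with the induction on $\hat\delta$ is precisely a rigorous rendering of that observation. The extra care you take (checking $h$ respects initial states, using only the ``preimage implies quotient transition'' direction of the definition, and flagging that the reverse inclusion fails) is all sound, but it is detail the paper deems immediate rather than a different argument.
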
 

The proof is immediate, as the quotiented transducer always introduces new transitions while preserving the old ones. 

If two coherently equivalent states, under some protocol, are quotiented in a transducer then the resulting transducers are coherently equivalent under the same protocol. 
\begin{theorem}[Soundness]
For any transducer $T:A$, protocol $P:A$, and states $s',s''\in S_T$, if $s'\asymp_A^T s''$ then $T \equiv_A^P T/(s',s'')$. 
\end{theorem}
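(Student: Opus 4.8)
The plan is to peel away the definitions until the statement becomes a pure language inclusion, and then to lift an accepting run of the quotient to an accepting run of $T$. By Definition~\ref{defn::t-coherent-equivalence} the goal $T\equiv_A^P T/(s',s'')$ is $T\cap P\equiv_A (T/(s',s''))\cap P$, and unfolding $\equiv_A$ together with the soundness of intersection (the remark following Definition~\ref{defn::t-intersection}) turns this into the set equality $\sbr{T}\cap\sbr{P}=\sbr{T/(s',s'')}\cap\sbr{P}$. The inclusion $\subseteq$ is immediate from Lemma~\ref{lemma::T subset from the quetiented}, since $\sbr{T}\subseteq\sbr{T/(s',s'')}$. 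So the entire content lies in the reverse inclusion $\sbr{T/(s',s'')}\cap\sbr{P}\subseteq\sbr{T}$: every \emph{protocol-legal} trace accepted by the quotient must already be accepted by $T$.

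First I would record the one structural fact about protocols that the argument needs: $\sbr{P}$ is \emph{prefix-closed}. This is immediate from the definition of $\hat\delta$, since any run witnessing a trace also witnesses each of its prefixes at the intermediate states. Now fix $t=V_1\cdots V_n\in\sbr{T/(s',s'')}\cap\sbr{P}$ together with a quotient run $q_0,\ldots,q_n$ for it. I would build a $T$-run $p_0,\ldots,p_n$ over the same rounds by induction on $i$, maintaining the invariant that $p_i$ projects onto $q_i$ under the quotienting map $(id_S\mid s'\mapsto s\mid s''\mapsto s)$; at the merged state this leaves a genuine choice, $p_i\in\{s',s''\}$, and it is exactly here that coherent equivalence is used.

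The inductive step away from the merged state is routine: off $\{s',s''\}$ the quotienting map is injective, so each quotient transition lifts uniquely. The crux is a step $(q_i,V,q_{i+1})$ taken from the merged state, say with $p_i=s''$. Two cases arise. If $s''$ affords a transition on $V$, I would use the coherent simulation witnessing $s''\smallfrown^P s'$ (Definition~\ref{defn::coherent-refine}, clause~1) to pick a matching transition whose target is coherently related to the quotient's target, so that the invariant can be re-established. If instead $s''$ affords \emph{no} transition on $V$ --- the dangerous case, where the quotient's move is a \emph{new} move inherited only from $s'$ --- then clause~2 of the simulation witnessing $s'\smallfrown^P s''$ forces $\omega_T(s'')\cdot V\cap\sbr{P}=\emptyset$ (its alternative, that $s'$ too affords no $V$, is excluded because the quotient could fire $V$). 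But the prefix $V_1\cdots V_i$ lies in $\omega_T(s'')$, since our run reaches $s''$, and $V_1\cdots V_i V$ is a prefix of $t\in\sbr{P}$, hence in $\sbr{P}$ by prefix-closure --- contradicting $\omega_T(s'')\cdot V\cap\sbr{P}=\emptyset$. Thus the dangerous case never occurs along a protocol-legal trace, and the symmetric roles of $s'$ and $s''$ are covered because $s'\asymp_A^P s''$ supplies coherent simulations in both directions.

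I expect the main obstacle to be the bookkeeping in the benign sub-case: because transitions are nondeterministic, $s''$ may afford $V$ yet only reach targets that do \emph{not} project onto $q_{i+1}$. Re-establishing the invariant then cannot be done by naive projection; it requires threading the coherent-simulation relation through the lifted run, so that $p_{i+1}$ is kept coherently related to a preimage of $q_{i+1}$ rather than equal to one. Getting this invariant exactly right --- strong enough to continue the induction yet compatible with both clauses of Definition~\ref{defn::coherent-refine} and with the two-directional nature of $\asymp_A^P$ --- is the delicate part; the contradiction argument using clause~2 and prefix-closure, by contrast, is the easy and decisive ingredient.
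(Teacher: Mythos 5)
Your outer decomposition is exactly the paper's: unfold Def.~\ref{defn::t-coherent-equivalence}, dispose of one inclusion via Lemma~\ref{lemma::T subset from the quetiented}, and reduce everything to showing that every protocol-legal trace of the quotient is already a trace of $T$. Where you diverge is the inner induction: the paper inducts on trace \emph{length}, carrying only the membership trichotomy ``$t\in\sbr T$, or else $t\notin\sbr T$ and $t\notin\sbr P$'', and in its inductive step it implicitly identifies the state reached by the quotient run with (a preimage of) the state reached by some $T$-run of the same prefix --- precisely the run/trace mismatch that your run-lifting formulation brings into the open. So your plan is a refinement aimed at a genuine weakness of the paper's own argument; the problem is that you stop exactly where the work is.

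The gap is your invariant, which you leave as ``to be got exactly right.'' The version you gesture at --- $p_{i+1}$ is kept \emph{coherently related} to a preimage of $q_{i+1}$ --- does not close the induction: at the next step the quotient's witnessing transition may leave the \emph{other} preimage of the merged state, so you must compose $(p_{i+1},\tilde q_{i+1})\in R$ with $(\tilde q_{i+1},\tilde q)\in R$, and the paper itself warns in its closing section that coherent simulation is \emph{not transitive}. The invariant that works is: $p_i$ is related to some preimage of $q_i$ by the reflexive--transitive closure $Q=(R_1\cup R_2)^*$ of the union of the two relations witnessing $s'\smallfrown^P s''$ and $s''\smallfrown^P s'$. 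Clause~1 of Def.~\ref{defn::coherent-refine} lifts to $Q$-chains by iterating it link by link, and the two preimages of the merged state are $Q$-related in both orders, so the induction goes through. Once this is done, notice what happens to your ``dangerous case'': it never occurs. If the quotient fires $V$ from the merged state, the witness comes from $s'$ or $s''$, and clause~1 of the simulation in the opposite direction (which $\asymp_A^P$, Def.~\ref{defn::symmetric-property}, supplies) already forces the other state to afford $V$; more generally clause~1 as printed is unconditional, so matching along $Q$-chains cannot fail. Hence, under the definitions as literally stated, one in fact gets $\sbr{T/(s',s'')}=\sbr T$ with no reference to $P$ at all: the ingredient you call ``easy and decisive'' (clause~2 plus prefix-closure) is never exercised --- the same is true of the clause-2 branch in the paper's proof --- and the bookkeeping you deferred is the entire proof. (Clause~2 would do real work only if clause~1 were itself protocol-relativised, which is plausibly the authors' intent but not what Def.~\ref{defn::coherent-refine} says; that observation, however, cannot rescue an argument whose crucial invariant is left unresolved.)
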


\begin{proof}
By expanding Def.~\ref{defn::t-coherent-equivalence}, we need to show that $T\cap P\equiv_A T/(s',s'')\cap P$. Lem.~\ref{lemma::T subset from the quetiented} proves one direction. The other direction, by soundness of $\cap$, is $(\sbr{T/(s',s'')}\setminus\sbr{T})\cap \sbr{P}= \emptyset$. 
We prove it by induction on the length of trace $t\in\mathcal P^*(A)$. Let $t \in\sbr{T/(s',s'')}$. We need to show that either $t \in \sbr{T}$, or $t \notin \sbr{T}$ and $t \notin \sbr{P}$. From the definition of $\sbr{T/(s',s'')}$, $\exists q\in S_{T/(s',s'')}$ such that $(s_{T/(s',s'')}^0, t, q)\in\hat{\delta}_{T/(s',s'')}$. According to Def.~\ref{defn::t-quotienting]} if $s_T ^0 = s_{T/(s',s'')}^0$ then this implies that $t \in \sbr{T}$, otherwise $s_T^0 =s'$ and $s_{T/(s',s'')}^0 = s$ or $s_T^0 =s''$ and $s_{T/(s',s'')}^0 = s$. Clearly in the latter two cases we have $t \notin \sbr{T}$, but since we assumed $s\asymp_A^T s'$ then by  Def.~\ref{defn::coherent-refine} and Def.~\ref{defn::symmetric-property} and by considering that $\epsilon$ is a witness trace for the start state $s_T^0 $, for both cases, we get $t \notin \sbr{T}$ and $t \notin \sbr{P}$. So the property holds for every trace of length one. 

For the inductive step we want to show that for any trace $t\cdot V$ if $t\cdot V \in \sbr{T/(s',s'')}$ then $t\cdot V \in \sbr{T}$ or $t \cdot V \notin \sbr{T}$, and $t\cdot V\notin \sbr{P} $. First note that if $t \cdot V \in \sbr{T/(s',s'')}$ then $t \in \sbr{T/(s',s'')}$, since transducer languages are prefix-closed. Applying the induction hypothesis, either $t \in \sbr{T}$ or $t \notin \sbr{T}$, and $t \notin \sbr{P}$. For the first case, as in the base case, unfold Def.~\ref{defn::t-quotienting]} and take $t$ as a witness trace;  we deduce that if the target state of the trace $t$ is $s'$, respectively $s''$, and the source state of of round $V$ is $s''$, respectively $s'$, then this implies that $t \cdot V \notin \sbr{T}$, but since we we assumed earlier in this proof that $s\asymp_A^T s'$ and by unfolding  Def.~\ref{defn::coherent-refine} and Def.~\ref{defn::symmetric-property} then this means $ t \cdot V \notin \sbr{P}$. For the second case the proof is immediate from prefix closure, since either $t\cdot V \in \sbr{T}$ or $t \cdot V \notin \sbr{T}$, and $t\cdot V\notin \sbr{P}$. 
\end{proof} 

Note that if the protocol $P:A$ is the trivial protocol which accepts all interactions then coherent equivalence and quotienting become the conventional notions of equivalence and minimisation. 

The soundness theorem states that an environment constrained by the protocol cannot distinguish between the original and the quotiented transducer, which has a smaller number of states. Iteratively quotienting all pairs of coherently equivalent states produces a \emph{coherently minimised} transducer. 

Because we are working in a compiler, the issue of compositionality is also important. The interaction protocols between the program and the environment are dictated by the type signature of the program, therefore different programs will observe different protocols. The following result shows that coherent minimisation is not only sound, but also \emph{compositional}, i.e.\ it can be applied to any sub-component of a larger system without affecting its overall properties, including coherence equivalence itself.

We denote by $t\upharpoonright A$ a trace generated by deleting all port labels that do not belong to signature $A$. The definition of projection ($\upharpoonright$) can be lifted to sets by pointwise application. We denote by $\theta\subseteq \mathcal P(A)^*$ sets of traces over signature $A$. 
We define the \emph{interaction} of two set of traces $\theta\subseteq\mathcal P(A+B)^* $ and $\theta'\subseteq\mathcal P(B+C)^*$ as $\theta \parallel \theta' =\{ t \in \mathcal{P}(A+B+C)^* \mid t\upharpoonright A+B \in \theta\wedge t\upharpoonright B+C \in \theta'\}$. \emph{Composition} of sets of traces is defined as interaction followed by hiding (projection), which is the standard definition in trace-based models of processes $\theta;\theta'=\{t\in \mathcal{P}(A+C)^* \mid t \in \theta \parallel \theta' \}$.

\begin{definition}[Transducer interaction]\label{defn::t-interaction}
  Given transducers $T:A+B$ and $ T: B+C$, we define 
  $
  T\parallel T':A+B+C=\langle
  S_T\times S_{T'},
  (s_T^0,s_{T'}^0),
  \delta
  \rangle
  $, where 
  $
  ((s_1,s_1'),V,(s_2,s_2'))\in\delta$ iff  
  $(s_1,V\upharpoonright A+B,s_2)\in\delta_T$ and 
  $(s_1',V\upharpoonright B+C,s_2')\in\delta_{T'}.
  $
\end{definition}
\begin{definition}[Transducer projection] 
Given $T:A+B$ we define $T\upharpoonright A=\langle S_T,s_T^0,\delta\rangle$, where 
$  (s_1,V\upharpoonright A,s_2)\in\delta$ iff 
$  (s_1,V,s_2)\in\delta_{T}$.
\end{definition}
\begin{definition}[Transducer composition]
Given transducers $T:A+B$ and $ T: B+C$, we define 
  $
  T; T'=(T\parallel T')\upharpoonright A+C$.
\end{definition}
\begin{theorem}[Soundness of composition]\label{t::comp-soundness}
For any two transducers $T:A+B$ and $T':B+C$ we have 
$\sbr{T||T'}=\sbr{T}||\sbr{T'}$
and
$\sbr{T;T'}=\sbr{T};\sbr{T'}$.
\end{theorem}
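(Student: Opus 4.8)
The plan is to reduce both equalities to two run-correspondence lemmas relating the runs (via $\hat\delta$) of a composite transducer to the runs of its components, each proved by a routine induction on the length of a trace. The one fact I would isolate first is that projection distributes over concatenation roundwise and is length-preserving: reading $t\upharpoonright A$ as the trace obtained by replacing each round $V_i$ by $V_i\cap A$ (the reading forced by the transducer projection definition, which keeps every transition and merely relabels its round), one has $(t\cdot V)\upharpoonright A=(t\upharpoonright A)\cdot(V\upharpoonright A)$. Everything else is bookkeeping on top of this identity.

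For the first equality I would prove, by induction on the length of $t$, that $((s,s'),t,(q,q'))\in\hat\delta_{T\parallel T'}$ holds iff $(s,t\upharpoonright(A+B),q)\in\hat\delta_T$ and $(s',t\upharpoonright(B+C),q')\in\hat\delta_{T'}$. The base case $t=\epsilon$ is immediate. For $t=t''\cdot V$ I would split the run at the last round using the inductive clause of $\hat\delta$, apply Def.~\ref{defn::t-interaction} to the final transition to extract $(p,V\upharpoonright(A+B),q)\in\delta_T$ and $(p',V\upharpoonright(B+C),q')\in\delta_{T'}$, invoke the induction hypothesis on $t''$, and recombine the two components with the distributivity identity above. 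Instantiating $s,s'$ at the initial states and existentially quantifying the final states then yields $t\in\sbr{T\parallel T'}$ iff $t\upharpoonright(A+B)\in\sbr{T}$ and $t\upharpoonright(B+C)\in\sbr{T'}$; since the state space of $T\parallel T'$ is exactly $S_T\times S_{T'}$, the single existential factors into two independent ones, which is precisely $t\in\sbr{T}\parallel\sbr{T'}$.

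For the second equality I would first prove the analogous projection lemma $\sbr{T\upharpoonright A}=\sbr{T}\upharpoonright A$, again by induction on the length of the trace: because the transducer projection keeps the same states and replaces each round $V$ by $V\upharpoonright A$, a run of $T\upharpoonright A$ labelled $u$ corresponds exactly to a run of $T$ labelled by some $t$ with $t\upharpoonright A=u$. I would then chain definitions, $\sbr{T;T'}=\sbr{(T\parallel T')\upharpoonright(A+C)}=\sbr{T\parallel T'}\upharpoonright(A+C)=(\sbr{T}\parallel\sbr{T'})\upharpoonright(A+C)$, where the second step is the projection lemma and the third is the first equality just proved; and this last set is by definition $\sbr{T};\sbr{T'}$.

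The inductions are entirely routine, so the one point needing care — and the place where the statement could silently fail — is the projection convention. If $t\upharpoonright A$ were taken to delete rounds that become empty rather than keep them, the projection lemma would break, since $T\upharpoonright A$ retains every transition and hence every round, and the resulting length mismatch would have to be absorbed by stuttering. I would therefore fix the length-preserving, roundwise reading of $\upharpoonright$ from the outset; it is the one consistent with the transducer-level definitions of $\upharpoonright$ and $\parallel$, and it also makes explicit that the displayed definition of $\theta;\theta'$ is to be read as interaction followed by projection onto $A+C$ (as the surrounding text states), rather than as a literal subset of $\sbr{T\parallel T'}$.
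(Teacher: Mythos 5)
Your proof is correct. Note that the paper itself states this theorem without any proof (the \texttt{proof} environment that follows it in the source belongs to the Compositionality theorem), so there is no authorial argument to compare against; your two run-correspondence inductions, together with the projection lemma $\sbr{T\upharpoonright A}=\sbr{T}\upharpoonright A$, supply exactly the routine argument the paper implicitly takes for granted. Your care over the conventions is also well placed and is arguably the only real content here: the theorem is true only under the length-preserving, roundwise reading of $t\upharpoonright A$ (each round $V_i$ replaced by $V_i\cap A$, empty rounds retained), which is the reading forced by the transducer-level definitions of $\upharpoonright$ and $\parallel$; and the paper's displayed formula $\theta;\theta'=\{t\in \mathcal{P}(A+C)^* \mid t \in \theta \parallel \theta' \}$ must indeed be read as interaction followed by projection onto $A+C$, as the surrounding prose says, since under the literal reading the second equality would fail for any pair of transducers whose interaction necessarily involves $B$-events.
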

\begin{theorem}[Compositionality]
For any transducers $T,T':A+ B, T'':B+ C$ and protocols $P:A+ B, P':B+ C$ if $T\equiv_{A+ B}^P T'$ then $T;T''\equiv_{A+ C}^{P;P'} T';T''$.
\end{theorem}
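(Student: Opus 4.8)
The plan is to strip the statement down to a single trace-set identity and then localise the one step that carries real content. First I would expand the conclusion $T;T''\equiv_{A+C}^{P;P'}T';T''$ through Definition~\ref{defn::t-coherent-equivalence} and the soundness of $\cap$ recorded after Definition~\ref{defn::t-intersection}, turning it into $\sbr{T;T''}\cap\sbr{P;P'}=\sbr{T';T''}\cap\sbr{P;P'}$. Theorem~\ref{t::comp-soundness} then lets me replace each composite trace set by the composition of its factors, so that the goal becomes an equation in the sets $\sbr{T},\sbr{T'},\sbr{T''},\sbr{P},\sbr{P'}$ alone. By the same unfolding, the hypothesis $T\equiv_{A+B}^{P}T'$ is exactly the single equation $\sbr{T}\cap\sbr{P}=\sbr{T'}\cap\sbr{P}$ over the signature $A+B$.

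The linchpin is a \emph{distribution law} connecting composition, which is interaction (Definition~\ref{defn::t-interaction}) followed by hiding, with intersection:
\[
(\theta;\phi)\cap(\pi;\pi')=(\theta\cap\pi);(\phi\cap\pi'),
\]
for trace sets $\theta,\pi$ over $A+B$ and $\phi,\pi'$ over $B+C$. Granting it, the theorem is a three-line calculation: apply the law with $\theta=\sbr{T}$, $\phi=\sbr{T''}$, $\pi=\sbr{P}$, $\pi'=\sbr{P'}$ to rewrite $\sbr{T;T''}\cap\sbr{P;P'}$ as $(\sbr{T}\cap\sbr{P});(\sbr{T''}\cap\sbr{P'})$, substitute the hypothesis $\sbr{T}\cap\sbr{P}=\sbr{T'}\cap\sbr{P}$ in the left-hand factor, and run the law backwards. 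Since the right-hand factor $\sbr{T''}\cap\sbr{P'}$ is untouched and common to both sides, this delivers precisely $\sbr{T';T''}\cap\sbr{P;P'}$.

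Everything therefore rests on the distribution law, and this is where I expect the difficulty to concentrate. The inclusion $\supseteq$ is routine: a single interaction witness $s$ over $A+B+C$ with $s\upharpoonright A+B\in\theta\cap\pi$ and $s\upharpoonright B+C\in\phi\cap\pi'$ simultaneously certifies that $s\upharpoonright A+C$ lies in both composites on the left. The hard direction is $\subseteq$, and the obstacle there is the \emph{reconciliation of the hidden interface} $B$: a visible trace $u$ over $A+C$ that lies in both left-hand composites arrives with \emph{two} interaction witnesses --- one valid for $\theta,\phi$ and one valid for $\pi,\pi'$ --- which agree on their $A+C$ events but may disagree on the hidden $B$ events, whereas the right-hand side demands a \emph{single} witness accepted on both. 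My plan is to close the gap with a \emph{zipping} argument showing that along a legal interaction the shared $B$ events are forced by the visible trace, so the two witnesses in fact coincide. This determinacy of the interaction in the shared channel is the crux: it holds for the deterministic game-semantic Mealy machines that actually occur here, but it fails for arbitrary nondeterministic transducers --- one can exhibit small $\theta,\phi,\pi,\pi'$ for which the two inclusions genuinely diverge --- so the step must lean on the special structure of the model rather than on pure set theory, and it is the part I would scrutinise most carefully.
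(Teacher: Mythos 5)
Your reduction is the same as the paper's: both proofs unfold coherent equivalence via Def.~\ref{defn::t-coherent-equivalence}, pass to trace sets using soundness of intersection and of composition (Thm.~\ref{t::comp-soundness}), and then transport the hypothesis $\sbr{T}\cap\sbr{P}=\sbr{T'}\cap\sbr{P}$ across the hidden interface $B$. The difference is one of candour. Where you isolate a distribution law and mark its $\subseteq$ direction as the crux, the paper simply asserts it: from $t\in\sbr{T'};\sbr{T''}$ and $t\in\sbr{P};\sbr{P'}$ it concludes that ``there exists a trace $t':A+B+C$ such that $t'\upharpoonright A+C=t$ and $t'\in\sbr{T'}\parallel\sbr{T''}$ and $t'\in\sbr{P}\parallel\sbr{P'}$'' --- a \emph{single} interaction witness serving both composites. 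That is exactly the single-witness reconciliation you identify, and the paper gives no argument for it; the rest of its proof (project $t'$ to $A+B$, land in $\sbr{T'}\cap\sbr{P}=\sbr{T}\cap\sbr{P}$, reassemble) is your three-line calculation written out.

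Your suspicion about that step is correct, and stronger than you put it: not only does the distribution law fail for arbitrary transducers, the theorem as stated does. Take $A=\{a\}$, $B=\{b_1,b_2\}$, $C=\{c\}$; let $T$ have no transitions, and let $T'$, $T''$, $P$, $P'$ each have a single transition out of the initial state, labelled $\{a,b_1\}$, $\{b_1,c\}$, $\{a,b_2\}$, $\{b_2,c\}$ respectively. Then $\sbr{T}\cap\sbr{P}=\{\epsilon\}=\sbr{T'}\cap\sbr{P}$, so $T\equiv_{A+B}^{P}T'$; but the one-round trace $\{a,c\}$ lies in $\sbr{T';T''}\cap\sbr{P;P'}$ (with interaction witness $\{a,b_1,c\}$ for the first composite and the \emph{different} witness $\{a,b_2,c\}$ for the second --- your reconciliation obstacle made concrete), while $\sbr{T;T''}=\{\epsilon\}$, so $T;T''\not\equiv_{A+C}^{P;P'}T';T''$. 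Note also that all five machines in this example are deterministic, so determinacy of the individual transducers cannot be the missing hypothesis: what fails is a \emph{joint} property of the transducers and the protocol, namely that they agree on which hidden $B$-events realise a given visible trace. Any repair must add such a compatibility assumption (for instance that $T$ and $T'$ themselves respect $P$, i.e.\ $\sbr{T},\sbr{T'}\subseteq\sbr{P}$ --- though then the hypothesis collapses to $\sbr{T}=\sbr{T'}$ and the statement trivialises). So your proposal is incomplete exactly where the theorem itself is deficient: the ``zipping'' lemma you defer is not provable at this level of generality, and the paper's own proof is unsound at precisely the same point.
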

\begin{proof}
 Let $T\equiv_{A+ B}^P T'$ which by Def.~\ref{defn::t-coherent-equivalence} is equivalent to $T \cap P = T' \cap P $. By conventional equivalence and the soundness of transducer intersection, $\sbr{T}\cap \sbr P = \sbr{T'} \cap \sbr{P}$. We want to show that $T;T''\equiv_{A+ C}^{P;P'} T';T''$, which by Def.~\ref{defn::t-coherent-equivalence} is  $\sbr{(T;T'')\cap (P;P')} =\sbr{(T';T'')\cap (P;P')}$, which we do by double inclusion.
$\sbr{(T';T'')\cap (P;P')} \subseteq \sbr{(T;T'')\cap (P;P')}$. Let $t \in \sbr{(T';T'')\cap (P;P')}$, which by the soundness of transducer intersection and composition is equivalent to $t \in \sbr{T'};\sbr{T''}$ and $t\in\sbr{P};\sbr{P'}$, so there exists a trace $t':A+B+C$ such that  $t'\upharpoonright A+C = t$ and $t' \in \sbr{T'}\parallel \sbr{T''}$ and $ t' \in \sbr{P}\parallel\sbr{P'}$. By the definition of trace-set interaction it follows that $t'\upharpoonright A+B \in \sbr{T}$ and $t'\upharpoonright A+B\in\sbr{P}$ and $ t'\upharpoonright B+C \in \sbr{T''}$ and $t'\upharpoonright B+C\in \sbr{P'}$. Since earlier we have shown  $\sbr{T}\cap \sbr{P} = \sbr{T'} \cap \sbr{T'}$ it immediately implies that $t'\upharpoonright A+B \in \sbr{T}$, which means that $t' \in \sbr{T}\parallel \sbr{T''}$. By following the definition of trace-set composition we deduce that $t'\upharpoonright A+C \in \sbr{T};\sbr{T''}$. Since $t = t'\upharpoonright A+C$, $t \in \sbr{T};\sbr{T''}$, and hence we have already $t \in \sbr{P};\sbr{P'}$ we conclude that $t \in (\sbr{T};\sbr{T''}) \cap (\sbr{P};\sbr{P'})$ which by the soundness of transducers intersection and composition is  equivalent to $t \in \sbr{(T;T'')\cap (P;P')} $. The other direction is similar.
\end{proof}

\section{Symbolic transducers and synthesis}
Finite-state transducers are unsuitable in dealing with numbers due to the very large numbers of states required. Transducers interpret the entire state space explicitly, and hence, it is too computationally expensive to construct them for arbitrary values. A standard technique is to use symbolic representations to overcome these limitations. Several transitions from one source state to different target states can be combined into a single transition governed by a symbolic condition, a \emph{guard}. A symbolic finite-state transducer (SFST) uses two components to represent state: a finite set of \emph{control states} and a finite set of \emph{registers} of unbounded size to handle data. Registers have initial values and can be modified explicitly via symbolic expressions (\emph{updates}). In our concrete implementation, symbolic guards and updates use the expressions syntax of YICES, a state of the art of tool for solving satisfiabilty modulo theory (SMT).~\footnote{\url{http://yices.csl.sri.com/}} Generating synthesisable HDL descriptions of circuits from symbolic transducers is straightforward. 

\begin{definition}[SFST]
 A Symbolic Finite State Transducer $T$ over port signature $A$ written $T:A$ is defined by as a tuple $T\stackrel{\text{def}}= \left\langle S_T,Reg_T, s^0, \delta_T \right\rangle $, where:
\begin{itemize*}
\item  a finite set of \emph{states}, $S_T$
\item  a finite set of \emph{local stores} (\emph{registers}) $Reg_T$
\item  a designated \emph{start state}, $s^0\in S_T$ 
\item  A \emph{transition relation}, $\delta_T \subseteq S_T \times \mathcal{P}(A) \times \Gamma \times \Delta \times S_T$ 
\end{itemize*}
\end{definition}
Note that the \emph{guard} and the \emph{update} are symbolic expressions specified in a fixed grammar, in our case that of YICES. Let $\Gamma$ and $\Gamma'$ be the set of guard and update expressions, respectively and let $\Delta \subseteq Reg\times\Gamma'$ be a language of \emph{updates} which includes the identity. Every register $x \in Reg$ is assumed to have initial value 0.

As a simple example, a SFST which reads an integer twice from port $x$ and outputs the sum of the two values, only if it is positive, to port $r$ can be defined as below.
\[
T=\langle
\{A, B, C\},
\{y, z\},
A,
\{
(A, \{x\}, true, y\leftarrow x, B),
(B, \{x\}, true, z\leftarrow x, C),
(C, \{r\}, y+z>0, r\leftarrow y+z, A)
\}
\]
All definitions from the previous section lift in the obvious way to SFSTs since their correctness is not contingent on the set of states being finite, and since all SFSTs can be obviously represent explicitly by mapping the register values into a concrete state. The key difference between FSTs and SFSTs is in their computational properties, which for the latter can be undecidable.  For this reason, the key notion of coherent equivalence and quotienting only apply to control states, rather than symbolic states encompassing both control states and register values. For computational reasons we also restrict the notion of protocol to the order in which ports are activated, ignoring the values on the ports. So, as a protocol we could specify that an operation reads the input twice then produces output, but we cannot specify that it is an adder. 

\begin{definition}[Symbolic protocol]
  A SFST $T$ is a \emph{symbolic protocol} only if $\Gamma=\{\mathit{true}\}$ and $\Delta=id$. 
\end{definition}
\begin{definition}[SFST-Coherent state simulation]\label{defn::sfst-coherent-refine}
  Given a SFST $\text{ } T:A$, a symbolic protocol $P:A$ and a relation $R \subseteq S_T \times S_T$, we say that $R$ is a \emph{coherent simulation}, iff, for any $(s',s'') \in R$ the following two conditions hold:
  \begin{enumerate}
  \item For any transition $(s'', V, g'', \Delta'', r'') \in \delta_T$, there exists a transition $(s', V, g', \Delta', r') \in \delta_T$ such that $(r', r'') \in R \text{ and } g'\equiv g'' \text{ and } \Delta' \equiv \Delta''$;
  \item For any round $V \subseteq A$, such that there is no transition from $s''$ labelled with $V$,
\begin{enumerate}
\item either there is no transition from $s'$ labelled with $V$,
\item or $\omega_T (s'')\cdot V \cap \sbr P = \emptyset.$
\end{enumerate}
\end{enumerate}
\end{definition}

\begin{figure}
  \begin{center}
    \includegraphics[scale=.8]{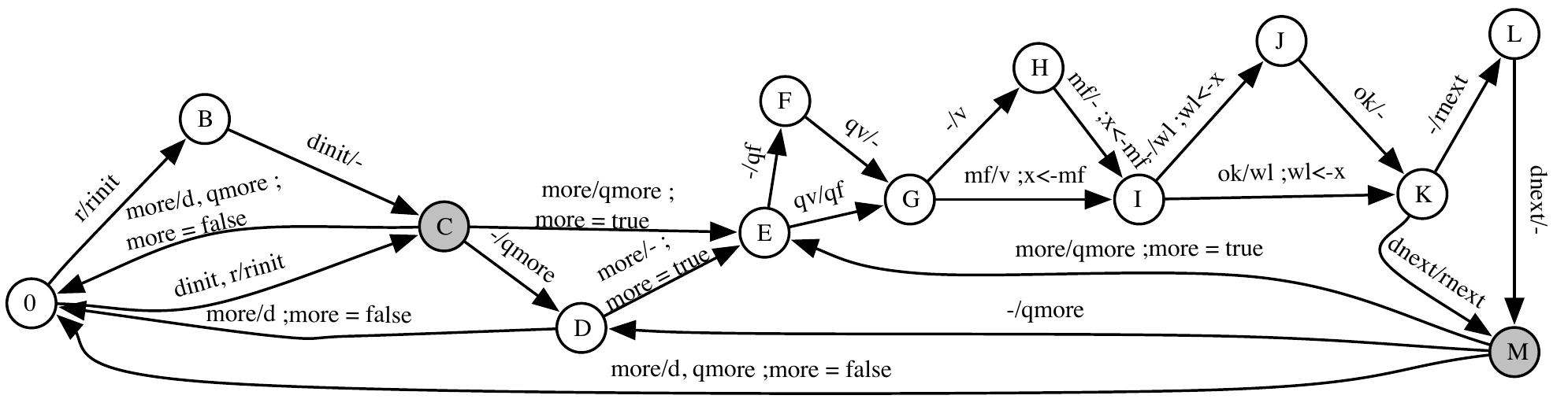}
  \end{center}
  \caption{Original representation of in-place map}
  \label{ref:orig}
\end{figure}

\subsection{Example: coherent optimisation of in-place map}
\begin{wrapfigure}[20]{l}[0pt]{60mm}
  \includegraphics[scale=.8]{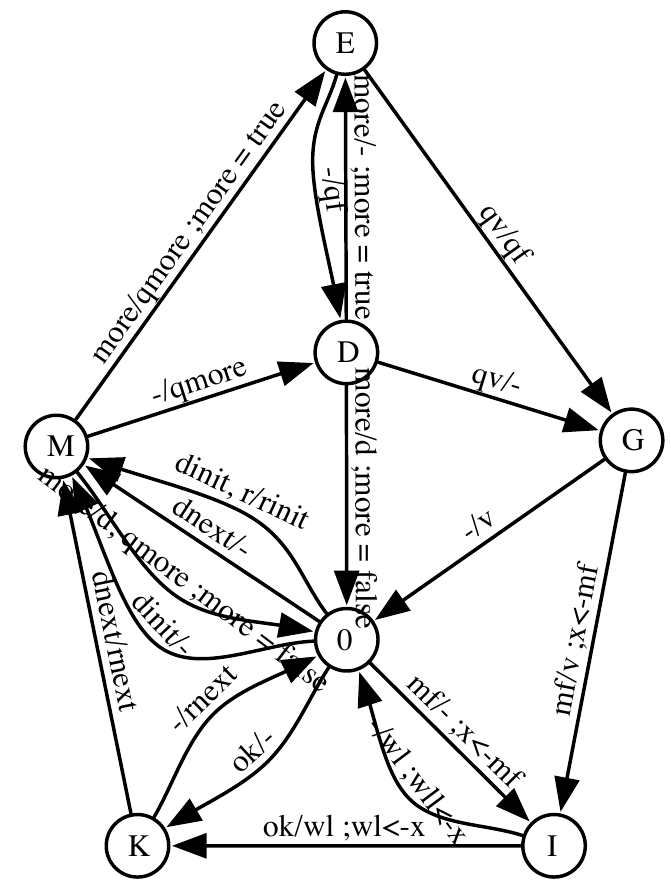}
  \caption{Coherently minimised transducer}
  \label{fig:minim}
\end{wrapfigure}
 
Consider a \Verity\ program which applies a function \texttt{f} to a data structure through an iterator: \texttt{init} initialises it, \texttt{more} tests whether there are more elements, \texttt{l} gives write-access to the current location, \texttt{v} gives the value of the current location and \texttt{next} moves to the next location. The code~is:
\\
\texttt{
\mbox{}\hspace{3ex} import f, init, more, l, v, next;\\
\mbox{}\hspace{3ex} init; while more do \{l := f(v); next\}\\
}

The (asynchronous) protocol $P$ that is considered here for the minimisation process, given by the type signature of the program, can be described by the following regular expression: $(r ( q_{more}  b_{more} + q_f^1 (q_f^2 m_f^2)^* m_f^1 + r_{init}  d_{init} + r_{next} d_{next} + w_{l}^n  ok_{l} + q_{v} m_{v})^* d)^*$. The synchronous version is derived from this using round abstraction and is significantly more complicated. 

The compilation of the program gives the SFST in Fig.~\ref{ref:orig}, with 13 states. Because the protocol is enforced by the monitor we can apply coherent minimisation, which results in a version with 7 states (Fig.~\ref{fig:minim}). The tuples of coherently equivalent states are $D\asymp^PF$, $0\asymp^PB\asymp^PH\asymp^PJ\asymp^PL$, $C\asymp^PM$, and can be quotiented away.  Using conventional bisimulation, such as implemented by the Hopcroft minimisation algorithm, we get an automaton with 12 states as only $C$ and $M $ are bisimilar. 
\\[2ex]

\section{Towards practical and efficient secure computation} 

Higher-level synthesis via \GoS\, augmented with the protocol-monitoring mechanism of Sec.~\ref{sec:palla}, can produce tamper-proof instances of arbitrary libraries with rich functional interfaces. This can offer a new approach to the problem of collaborative computation without data disclosure, i.e.\ \emph{secure computation}~\citep{Ben-Or:1988:CTN:62212.62213}. Much research in this area is concerned with writing programs that do not inadvertently leak information, e.g.\ privacy-preserving data mining~\citep{Agrawal00}, but system-level security guarantees in the form of absence of low-level attacks and exploits are essential to make this practical. Because system-level security is difficult to guarantee on the desktop, secure computation is generally thought of in the context of distributed and cloud computing, where the physical separation of resources makes system-level security guarantees easier to achieve.

On the desktop, on the other hand, secure computation requires the presence of a trusted hardware module that can prevent sensitive data (keys, value registers, etc.) from being tampered with, the typical example being the \emph{Trusted Platform Module} (TPM)\footnote{See \url{http://www.trustedcomputinggroup.org/}}. A TPM can also be used to authenticate arbitrary binary code and authorise its access to sensitive data, so it is possible to set up a secure computation framework using it. The most practical way of doing this is using virtualisation to set up a secure virtual machine for the execution of trusted code, and interfacing it with the rest of the machine as if it was a distinct physical computer. This works because modern processors support virtualisation natively and protect the memory space of the virtual machine. 

Hardware compilation is a way to produce fully customised secure hardware modules that can interface with the rest of the system using a convenient higher-order interface. Low-level attacks and exploits on the module are prevented first via the physical tamper-proof mechanism of the FPGA fabric, and logically through a monitoring mechanism which prevents interactions that do not respect the programming language protocol. This allows properties established by reasoning at the programming languages level to be guaranteed in the implementation. Compared with TPM-based approaches, this approach has two potential advantages. First, is its simplicity. No special TPM is required and no native virtualisation support is needed in the untrusted device. Through higher-level synthesis we can produce special-purpose devices that provide only restricted functionality. Verifying the logical security properties of such devices is significantly easier than verifying the security properties of general-purpose software and hardware mechanisms such as TPMs and virtualisation frameworks. The second advantage is that of low overhead. TPM-based secure computation on the desktop involves a significant amount of overhead, as does the communication between the secure virtual machine and the rest of the system. On the other hand a FPGA can be set up to interface with a CPU on a physical level via the system bus; a variety of FPGA-based PCI cards are commercially available and can be used to implement this system. This is further work. 

We also note that once the interaction between a system and its environment can be effectively constrained, more states in the system are observably equivalent. This novel notion of equivalence, called coherent equivalence, can be used to aggressively optimise such systems by reducing the number of states. Coherent equivalence is, of course, not restricted to our particular semantic model, which is meant to serve as motivation and illustration, but to any automata operating in restricted environments. For example, APIs themselves can be enforced by a monitor, stipulating that functions belonging to the API must be called in a particular order. 

Our formulation is a first step, but more work needs to be done: the main definition (Def.~\ref{defn::coherent-refine}) has a formulation which relies on the trace-semantic interpretation of the transducer; a direct definition similar to that of bisimulation would be preferable. This is also future work. Most importantly, the simulation relation of Def.~\ref{defn::coherent-refine} is in general not transitive. This raises one technical problem which is somewhat unpleasant although not critical. Our state-reduction algorithm will lead to results which are dependent of the order in which coherently equivalent states are identified and quotiented.  If $s_1\frown s_2$ and $s_2\frown s_3$ and if we quotient $s_1,s_2$ (as a new state $s'$) we don't know a priori whether $s'\frown s_3$. From experiments we know that sometimes it holds sometimes it doesn't. Practically, this makes the state-reduction algorithm slower because $\frown$ needs to be recalculated after each quotienting.






\bibliographystyle{abbrvnat}

\bibliography{ghica}


\end{document}